\newtheorem{theorem}{Theorem}
\newtheorem{assumption}{Assumption}
\newtheorem{corollary}{Corollary}
\newtheorem{lemma}{Lemma}
\newtheorem{definition}{Definition}
\title[Identification of a class of index models]{Identification of a class of index models: A topological approach\thanks{
This project has received funding from the European Research Council (ERC)
under the European Union's Horizon 2020 research and innovation programme
(grant agreement No. 740369).}}
\author{Mogens Fosgerau$^{\dagger}$ and
                Dennis Kristensen$^{\ddagger}$}
\address{$^{\dagger}$Dept. of Economics, Univ. of Copenhagen, Øster Farimagsgade 5,
1353 København K, Denmark.}
\email{mogens.fosgerau@econ.ku.dk}
\address{$^{\ddagger}$Department of Economics, University College London, Gower Street, London, WC1E 6BT, UK.}
\email{d.kristensen@ucl.ac.uk}
\def\AmSTeX{$\cal A$\kern-.1667em\lower.5ex\hbox{$\cal M$}\kern-.125em
    $\cal S$-\TeX}
\def\BibTeX{{\rm B\kern-.05em{\sc i\kern-.025em b}\kern-.08em
    T\kern-.1667em\lower.7ex\hbox{E}\kern-.125emX}}
\begin{document}

\begin{abstract}
We establish nonparametric identification in a class of so-called index models using a novel approach that relies on general topological results. Our proof strategy requires substantially weaker conditions on the functions and distributions characterizing the model compared to existing strategies; in particular, it does not require any large support conditions on the regressors of our model. We apply the general identification result to additive random utility and competing risk models.
\end{abstract}

\keywords{Nonparametric identification, Discrete choice, Competing risks, Index model.}

\section{Introduction}

We develop a novel nonparametric identification result for the following class of models,
\begin{equation}
\Pi \left( w,x,z\right) =\Lambda \left( a(w,x),z\right) ,  \label{eq:model}
\end{equation}where
\begin{equation}
a\left( w,x\right) =g\left( w\right) +h\left( x\right) .
\label{eq: a(w,x) def}
\end{equation}
is a vector of additively separable index functions while $\Lambda :%
\mathbb{R}^{J}\times \mathbb{R}^{d_{Z}}\mapsto \mathbb{R}^{J}$, $g:\mathbb{R}^{d_{W}}\mapsto \mathbb{R%
}^{J}$ and $h:\mathbb{R}^{d_{X}}\mapsto \mathbb{R}^{J}$ are all vector-valued functions of dimension $J\geq 1$. The arguments $w\in \mathbb{R%
}^{d_{J}}$ and $x\in \mathbb{R}^{d_{X}}$ represent the values of two sets  of regressors, $W$ and $X$, while $z\in \mathbb{R}^{d_{Z}}$ corresponds to values of a set of control variables, $Z$. We take as high-level assumption that we know (have observed from data) the function $\Pi \left( w,x,z\right)
$, for $\left( w,x,z\right)$ in the support of $\left( W,X,Z\right) $, from which we then wish to identify the unknown functions $\Lambda \left( a,z\right) $ and $h\left( x\right) $, while we treat the function $g\left( w\right) $ as being known. We refer to this class of models as index models since $W$ and $X$ are restricted to enter the model through $g(W)$ and $h(X)$, respectively. We make three major contributions relative to the existing literature:

First, we do not impose any large support conditions on any of the regressors in our model. Most existing results on identification within this class of models require availability of a set of "special" continuously distributed regressors; identification is then achieved by sending each of these special regressors off to the boundary of their support. Estimators based on such "thin set identification" argument were analyzed by \citet{Khan2010} who showed that they tend to be irregularly behaved with slow convergence rates. In contrast, we achieve identification as long as the random index $a\left( W,X\right)$ exhibits sufficient, but potentially bounded, variation. We expect this to translate into better behaved estimators.

Second, we impose weak conditions on the functions of interest and distributions of the random variables $(W,X,Z)$. We do not require continuity or differentiability of the functions entering the model  in order to show identification while most existing results as a minimum require these to be differentiable. Similarly, we only require $g(W)$ to have continuous support while $(X,Z)$ can both be discrete, continuous or a mix of the two as long as their supports satisfy certain conditions.  Thus, our results cover models with thresholds and kinks in $\Lambda $, $g$ and $h$, which existing results cannot handle. In the case of discrete choice models such features may occur if the decision maker optimizes subject to constraints; see, e.g., \citet{Cantillo2006a}. These models have traditionally been formulated in a parametric fashion; our theory demonstrates how these can be identified without parametric constraints. There is a growing literature on nonparametric estimation with unknown thresholds and kinks which we conjecture can be employed in our setting in order to translate our identification result into actual estimators; see, e.g., \citet{Chiou2018}.

Third, we show how the presence of the controls $Z$ can help to achieve identification in a nontrivial way: We first show local identification at each value of the control $Z$. Suitable variation in $Z$ then allows us to piece the locally identified components together across different values of $%
Z$ to achieve global identification. In comparison, most other papers that allow for control variables show identification at a fixed arbitrary value of $Z$ in which case variation in $Z$ is unnecessary for identification.

Our proof strategy relies on arguments from general topology that, to our knowledge, are completely new to the literature on nonparametric identification. These should be of general interest since they can be used for identification in other settings. The two key elements of our approach is the notions of relative identification and connected sets. Below, we state our formal definition of the former:

\begin{definition}
A function $h$ is said to be \textit{relatively identified} on a given set $\mathcal{X}$ if identification of $h \left(x^{*} \right)$ at some point $x^{*}\in \mathcal{X}$ implies that $h \left( x \right)$ is also identified at all other $x \in \mathcal{X}$.
\end{definition}

Next, recall the topological notion of connnectedness: A connected set cannot be contained in the union of two non-empty disjoint open sets while having non-empty intersection with both. In particular, it is not possible to split a connected open set into disjoint open subsets.

Our identification strategy then proceeds in three steps where we here initially suppress the presence of $Z$ for simplicity: First, we decompose the support of $X$ into suitable subsets and achieve relative identification on each of these. This is done via two features of our model: For a given $ x$, we are able to identify the relative variation in $\Lambda \left(a \right)$, with $a = h \left( x \right)+ g\left(w \right)$, through the observed variation in $\Pi\left(w,x\right)$ w.r.t. $w$ through the known function $g(w)$. By injectivity of $%
\Lambda $, we are then able to identify the relative value of $a$ which in turn yields the relative value of $h \left( x \right)=a - g\left(w \right)$ on suitably chosen subsets of the support of $X$. Second, we achieve global identification on the union of these subsets by using the second main ingredient of our proof strategy, connectedness: We will require the support of $%
a\left( W,X\right) $ to be connected which is used to extend relative local identification to global identification. Finally, reintroducing $Z$, we again rely on the supports of $X|Z=z$ to be suitably connected across different values of $z$ in the support of $Z$ to enlarge the identification region further.

Like us, \citet{Berry2018} and \citet{Evdokimov2010}, among others, rely on connectness to achieve global identification but in these papers the restriction is imposed directly on the support of the covariates thereby implicity restricting the covariates to be continuous. In contrast, we impose connectedness on the image of $a\left( W,X\right) $ and so allow for both $X$ and $W$ to contain discrete components.

Two leading examples that fall within our general framework are nonparametric additive versions of multiple discrete choice and competing risk models as shown in the next section. There is a large literature on identification and estimation of semiparametric multinomial choice models \citep[see, e.g.,][]{Manski1975,Lewbel2000}. In contrast, the literature on nonparametric identification is quite thin with few results having been developed since the seminal work of \citet{Matzkin1993}. In terms of modelling, \citet{Matzkin1993} is probably the most closely related to our setting, but the assumptions made and identification strategy pursued in this paper are very different from ours. Our and her set of assumptions are not clearly ranked with some of our assumptions being stronger while others weaker compared to hers. One key feature of her proof strategy is the introduction of assumptions that ensure the multinomial model can be converted into a binary choice problem followed by a thin-set identification argument. More recently, \citet{Allen2019} provide conditions under which one can identify how regressors alter the desirability of alternatives using only average demands. Their conditions are weaker than ours but on the other hand they are only able to identify certain features of the model, not the underlying data-generating structure.

There is also a nascent literature on nonparametric identification of so-called BLP models \citep[][]{Berry1995} as used in industrial
organization; see, for example, \citet{Berry2018} and \citet{Chiappori2018}. The setting of the BLP model is somewhat different, though, since there the choice probabilities are treated as observed variables which depend on unobserved product characteristics that have to be controlled for. This leads to a different identification problem compared to ours.

Finally, there is also a literature on identification in competing risk models. The two most closely related papers in terms of modelling are \citet{Heckman1989} and \citet{Lee2013}. \citet{Heckman1989} achieves identification by assuming the index (in our notation $a(W,X)$) has support  on $(0,\infty)^J$ and then achieves identification of a given component of the index by letting the other components go to zero, and so their result falls in the thin-set identification category. \citet{Abbring2003a} weaken this assumption substantially for the class of mixed proportional hazard models, a subclass of competing risk models.  \citet{Lee2013} provide a high-level assumption for identification of the general model involving a rank condition of an integral operator. Primitive conditions for this to hold are not known. \citet{Honore2006} derive bounds for the functions of interest when only discrete covariates are available. We complement these studies by showing identification in the general competing risk model under primitive conditions that allow for the presence of discrete co-variates, but at the same time impose more structure on the index, c.f. eq. (\ref{eq: a(w,x) def}). 

In the next section, we give two motivating examples in form of a random utility model and a competing risk model that both fall within the setting of eq. (\ref{eq:model}). We present our general framework in Section \ref{Sec: Identification} and the assumptions we will work under, and provide our identification results in Section \ref{sec:results}. Section \ref{Sec:Application} applies our general result to the two examples and Section \ref{Sec:Conclusion} concludes.

\section{Two Motivating Examples}
\label{Sec: Examples}

The model (\ref{eq:model}) comprises a range of models that are met in economics. We here present two classes of models that fall within our framework. We will return to these two classes of models in Section \ref{Sec:Application} where we apply our general identification result to each of them.

\subsection{Discrete choice models}

We here first demonstrate that the class of additive random utility models (ARUM) can be mapped into (\ref{eq:model}). Using existing results in the literature, this in turn implies that our results also apply to a broad class of rational inattention discrete choice models \citep[][]{Fosgerau2016r} and an even wider class of perturbed utility models.

\subsubsection{Additive random utility}

Consider an agent choosing between $J+1$ alternatives, each carrying an associated indirect utility of the form
\begin{equation*}
U_{j}= a_{j}\left( W, X\right)+\varepsilon _{j},\text{ \
\ }j=0,1,...,J, \label{ARUM model}
\end{equation*}where $\left( W,X\right) $ is a set of observed covariates while $\varepsilon =\left( \varepsilon _{0},\varepsilon _{1},...,\varepsilon
_{J}\right) $ is unobserved. This model was initially proposed by \citet{McFadden1974a} and has since become one of the workhorses in applied microeconomics; see e.g. \citet{Ben-Akiva1985} and \citet{Maddala1983}. As is standard in the literature, we impose the following normalization on the "outside"{}option $j=0$: $a_{0}\left( w,x\right) =0$.

Some of the regressors $\left( W,X\right) $ may potentially be dependent on $%
\varepsilon $. To handle this situation, we assume the availability of a set
of control variables $Z$ so that $(W,X)$ are independent of $\varepsilon$ conditional on $Z$. In addition to $\left( W,X,Z\right) $, the researcher also observes the
utility maximizing choice, $D=\arg \max_{j\in \left\{ 0,1,...,J\right\}
}U_{j}$. Thus, the conditional choice probabilities (CCP's),
\begin{equation}
\Pi _{j}\left( w,x,z\right) :=P\left( D=j|\left( W,X,Z\right) =\left(
w,x,z\right) \right) ,\text{ \ \ }j=0,1,...,J,  \label{eq: Pi ARUM}
\end{equation}are identified in the population. We collect these in the vector-valued function $\Pi \left( w,x,z\right) =\left\{ \Pi _{j}\left( w,x,z\right)
:j=1,...,J\right\} \in \mathbb{R}^{J}$ where we leave out the CCP of the outside option. It now follows from standard results in the literature that $\Pi \left( w,x,z\right) $ can be written on the form (\ref{eq:model}) with $\Lambda$  being the  gradient of the so-called surplus function; see Section \ref{Sec:Application} for further details.

Our identification result requires the researcher to group the observed covariates into two sets: The first set, denoted $W$, contains the "special" regressors that enter the index $a$ through a known function $g(W)$ as specified by the researcher, c.f. eq. (\ref{eq: a(w,x) def}). The second set, denoted $X$, then enters $a$ through $h(X)$ which is left unspecified. The choices of $W$ and $g(W)$ are application specific and should be guided by two considerations: First, $g(W)$ need to exhibit sufficient continuous variation on $\mathbb{R}^J$ since this is a key requirement for our identification result to go through. Second, since $g_j(W)$  affects the utility of the $j$th alternative positively by definition, it should be specified accordingly. 

As an example of this joint modelling and identification strategy, let us consider the problem of estimating willingness-to-pay for different goods, a common problem  in various applied fields of economics  \citep[e.g.,][]{Fosgerau2006d,Bontemps2016}. In this setting, choosing $g$ to be $g_j(W_j)=-\ln W_j$, where $W_j$ is the price of alternative $j$, $j=1,...,J$, transforms a positive price vector into a vector that can in principle attain values in all of $\mathbb{R}^J$. With this choice, $h_j(X)+\varepsilon_j$ captures the log willingness to pay for good $j$, where $X$  contains characteristics of the agent and other characteristics of the different alternatives. Prices generally exhibit continuous variation and so satisfy the first of the two aforementioned requirements. This example assumes the availability of alternative specific regressors, $W_1,....,W_J$. However, our identification result may still be applied if this is not true. In this case, the researcher needs to construct alternative-specific regressors $g_1(W),....,g_J(W)$ from a set of underlying covariates $W$. 

Our assumption of $g(W)$ being known has antecedents in the literature on identification in discrete choice models.  For example, in the context of binary choice ($J=1$), \cite{Lewbel2000} also assumes the presence of a "special" regressor, in our notation $W$, that enters the utility of alternative 1 in a known fashion. But this paper furthermore restricts $h(x)$ to be linear, $h(x)=\beta x$  and, importantly, identification of $\beta$ is achieved through variation of $g(W)$ on the boundary of its support. Our identification result does not rely on any such argument.

Our framework also includes so-called rational inattention discrete choice model. \citet{Fosgerau2016r} show that any ARUM satisfying the conditions above is observationally equivalent to a rational inattention discrete choice model in which the prior is held constant. This generalizes the finding of \citet{Matejka2015} who show that the multinomial logit model has a foundation as a rational inattention model. Thus, our identification result extends without effort to a broad class of rational inattention models.

\subsubsection{Perturbed utility}

The class of perturbed utility models \citep[][] {McFadden2012,Fudenberg2015,Allen2019} is another generalization of the class of ARUM. As shown by \citet{Hofbauer2002}, the CCP's of an ARUM can be represented as the solution to a maximization problem where an agent chooses the vector of CCP's to maximize a function that consists of a linear term and a concave term. Here we present an extended version that includes controls affecting the concave term, i.e.
\begin{equation}
\Lambda\left(a,z\right)=\arg\max_{q\in\Delta}\left\{
a^{\intercal}q+\Omega\left(q|z\right)\right\} , \label{pertubed utility}
\end{equation}
where $a\in\mathbb{R}^{J+1}$ is a vector of utility indices, $\Delta=\{q\in%
\mathbb{R}_{+}^{J+1}:\sum_{j=0}^{J}q_{j}=1\}$ is the unit simplex and $%
\Omega\left(\cdot|z\right)$ is a concave function for each $z\in\mathcal{Z}$.  
 The perturbed utility model includes ARUM as a special case, while allowing an individual to have strict preference for randomization rather than to choose a vertex of the probability simplex. As noted by \citet{Allen2019}, observing only realizations of lotteries across choice options is sufficient for identification which requires only the vector of CCP's, $\Pi\left(w,x,z\right)\ $. We show in Section \ref{Sec:Application} that the implied CCP's satisfy (\ref{eq:model}).

\subsection{Accelerated failure time models for competing risks}

Consider a competing risk model as in \citet{Heckman1989} with $J$ competing causes of failure. A latent failure time $T_{j}>0$ is associated with each cause $j\in \left\{ 1,...,J\right\} $. The econometrician observes the duration until the first failure, $Y=\min_{j\in \left\{ 1,...,J\right\}
}T_{j}$, and the associated cause of failure, $D=\arg \min_{j\in \left\{
1,...,J\right\} }T_{j}$, together with a set of observed covariates $\left(
X,W,Z\right) $. Assume that the $j$th failure time satisfies
\begin{equation}
\ln T_{j}=a_j(W,X)-\varepsilon_j, \label{competing risk model}
\end{equation}for some function $a_{j}(w,x)$ , $j=1,...,J$. The model may then be termed a multivariate generalized accelerated failure time model \citep[][]{Kalbfleisch2002,Fosgerau2013y}. The econometrician has knowledge of
\begin{equation}
 \Pi _{j}\left( w,x,z\right) := E\left[ \ln Y| W=w,X=x,Z=z\right]\cdot P\left( D=j|W=w,X=x,Z=z \right), \label{eq: Pi competing risk}
\end{equation} for $j=1,...,J$, where $Z$  is used to control for potential dependence between $(W,X)$ and $\varepsilon$. We collect the unobservables in $\varepsilon =\left( \varepsilon _{1},...,\varepsilon
_{J}\right) $ and again require them to be conditionally independent of $(X,W)$ in which case, as shown in Section \ref{Sec:Application}, $\Pi$ defined above again satisfies eq. (\ref{eq:model}).

Typical applications of the above model are in the modelling of (un)employment spells where an exit from the unemployment register can be the result of finding a full or a part-time job in different sectors or another change of status. Thus, in this setting, $j=1,...J$  indices the different exits (types of non-unemployment), and $(W,X)$ contain both variables characterizing the types of employment (such as salary in a given type/sector of employment) and individual-specific  controls (such as age and marital status). Similar to discrete choice models, we would then need to construct $g(W)$ to capture risk-specific characteristics with continuous variation and then include all other co-variates in $X$. Most empirical applications assume a parametric structure for the index, e.g. $a(W,X) = \alpha W + \beta X$. In this setting, requiring $g$ to be known effectively assumes fixing $\alpha\in \mathbb{R}^{J \times d_Y}$ . At the same time, we impose very weak restrictions on the distributional features of the regressors $X$ and how they enter the index $a(W,X)$.

\section{General framework}

\label{Sec: Identification}

We now return to the general model given in eqs. (\ref{eq:model}-\ref{eq: a(w,x) def}) where $g:\mathbb{R}^{d_Y}\rightarrow \mathbb{R}^{J}$ is assumed to be a known function while $h:%
\mathbb{R}^{d_{X}}\rightarrow \mathbb{R}^{J}$ and $\Lambda :\mathbb{R}%
^{J}\times \mathbb{R}^{d_{Z}}\rightarrow \mathbb{R}^{J}$ are unknown functions.  In the following, let $\mathrm{int}\mathcal{A}$ denote the interior of a given set $%
\mathcal{A}$ and let $\mathrm{supp}\left(Y\right) $ denote the support of a given random variable $Y$. We then take $\Pi \left( w,x,z\right) $ as given and known to us for all $\left( w,x,z\right) \in \mathrm{supp}\left( W,X,Z\right) \subseteq \mathbb{R}^{J}\times \mathbb{R}^{d_{X}}\times \mathbb{R}^{d_{Z}}$ where $%
\left( W,X,Z\right) $ denote the random variables that we have observed, c.f. the examples in the previous section.

The covariates contained in $g \left(W \right)$ play a special role in our approach in that we need sufficient continuous variation in these to achieve identification. First note that $\dim g\left( W\right) =J$. Thus, sufficient continuous variation of $g(W)$, which is known to us, permit us to identify the relative variation of $\Lambda \left( a,z\right) $ w.r.t. $a$. Formally, for any given pair $\left(x,z \right)\in \mathrm{supp}\left( X,Z\right) $, define 
\begin{equation}
\mathcal{G}\left( x,z\right) =\mathrm{int\ supp}\left( g\left( W\right)
|X=x,Z=z\right) ,\text{ \ \ }\mathcal{X}\left( z\right) =\mathrm{supp}\left(
X|Z=z\right)   \label{eq: G(x,z) def}
\end{equation}We will then throughout implicitly require that some of the open sets $\mathcal{G}\left( x,z\right)$,  $\left(x,z \right)\in \mathrm{supp}\left( X,Z\right) $, are non-empty and then achieve identification at the values of $x$ for which this is true.
A sufficient condition for a given  $\mathcal{G}(x,z)$ to be non-empty is that the distribution of $W|X=x,Z=z$ is continuous and that $g$ maps open sets into open sets; however, this is not required and $g(W)$ may contain discrete components as long as they fall within the support of the continuous component. However, our identification result still applies if any values of a discrete component fall outside the continuous support but excludes these values. This also rules out that some components of $W$ are included in $Z$ since in this case $\mathcal{G}\left( x,z\right) =\emptyset $. At the same time, however, $\left( X,W\right) $ can depend on $Z$; we just need sufficient variation in $\left( X,W\right) $ conditional on $Z$. Moreover, no continuity restrictions are imposed on the distribution of  $\left( X,Z\right) $ which may be completely discrete.
Finally, we would like to stress that we do not impose any large-support restrictions on $g(W)$, which is in contrast to most existing results in the literature, as discussed  in the Introduction. If, for example, $\mathcal{G}\left( x,z\right) = \mathbb{R}^J$, for all $x$, then our result demonstrates that $h(x)$ is identified on all of  $\mathrm{supp}\left( X\right)$ ; but it is not necessary, identification on all of  $\mathrm{supp}\left( X\right)$ can be achieved without such full support condition.

Next, let
\[
\mathcal{M}\left( x,z\right) =\mathcal{G}\left( x,z\right) \times \left\{x\right\} ,\text{ \ \ } \mathcal{A}\left( x,z\right) =a\left( \mathcal{M}\left( x,z\right) \right) =\mathcal{G}\left( x,z\right) +\left\{ h\left( x\right) \right\}  ,
\]
denote the support of $\left(W,X\right)|\left(X,Z\right)=\left(x,z\right)$ and $a\left(W,X\right)|\left(X,Z\right)=\left(x,z\right)$, respectively, and
\begin{equation}
\mathcal{M}\left( z\right) =\mathcal{\cup }_{x\in \mathcal{X}%
\left( z\right) }\mathcal{M}\left( x,z\right) , \text{ \ \ }
\mathcal{A}\left( z\right) =\mathcal{\cup }_{x\in \mathcal{X}\left( z\right)
}\mathcal{A}\left( x,z\right) =a\left( \mathcal{M}\left( z\right) \right) ,\label{eq: A(z) def}
\end{equation}
the supports of the same random variables but now only conditioning on $Z=z$. Finally, for some set  $\mathcal{Z}_{0} \subseteq\mathrm{supp}\left( Z\right)$  chosen according to certain assumptions stated below, let
\begin{equation}
\mathcal{A}_0 =\cup _{z\in \mathcal{Z}_0} \mathcal{A}\left( z\right)
\text{ \ \ }\mathcal{X}_0 =\cup_{z\in \mathcal{Z}_0} \mathcal{X} \left( z\right) .  \label{eq: A_0 def}
\end{equation}be the supports of $a\left(W,X\right)$ and $X$ conditional on  $Z\in \mathcal{Z}_0$, respectively. We will then show identification of  $h\left(x\right) $ and $\Lambda \left( a,z\right) $ for $x \in
\mathcal{X}_{0}$, $a\in \mathcal{A}_{0}$ and $z\in \mathcal{Z}_{0}$. Specifically, $\mathcal{Z}_{0}$ will be constructed according to certain properties of the underlying covariates and the functions of interest. Observe the dependence of $\mathcal{M}_{0}$ and $\mathcal{A}_{0}$ on the set $\mathcal{Z}_{0} $. To achieve \textquotedblleft maximal\textquotedblright\ identification, we would ideally like to choose $\mathcal{Z}_{0}=\mathrm{supp}\left( Z\right) $. However, we potentially have to restrict $\mathcal{Z}_{0}$. First, we require $a\mapsto \Lambda \left( a,z\right) $ to satisfy the following condition for all $z \in \mathcal{Z}_0$:

\begin{assumption} \label{A:injective}For any $z\in\mathcal{Z}_{0}$, $a\mapsto\Lambda\left(a,z%
\right)$ is injective on $\mathcal{A}\left(z\right)$ as defined in (\ref{eq: A(z) def}).
\end{assumption}

By asking for $\Lambda (a,z) $ to be injective, we can identify the relative variation in $a \left(w,x\right)$ through the observed variation in $\Pi \left(w,x,z \right)$. In a given application, Assumption \ref{A:injective} may not hold for all $z\in \mathrm{supp}\left( Z\right) $ in which case we need to remove such values from $\mathcal{Z}_{0}$. In the worst case scenario, this leaves us with $\mathcal{Z}_{0}$ being empty and our identification result becomes void. At the other extreme, $\mathcal{Z}_{0}=\mathrm{supp}\left(
Z\right) $ and we may achieve identification on the whole support. 

Due to the structure of $a\left( w,x\right) $, it follows from the definition of $\mathcal{G}\left(x, z\right) $ that $\mathcal{A}\left(x, z\right) $ and thereby also $ \mathcal{A} \left(z\right)$ and $\mathcal{A}_{0}$ are open sets. We add to this by also requiring $ \mathcal{A} \left(z\right)$ to be connected  for all $z \in \mathcal{Z}_0$. An open set $\mathcal{A}$ is connected if  $\mathcal{A}= \mathcal{O}_{1}\cup \mathcal{O}_{2}$ implies that $ \mathcal{O}_{1}\cap \mathcal{O}%
_{2}\neq \varnothing$  whenever  $\mathcal{O}_{1}$ and $\mathcal{O}_{2}$ are nonempty open sets. Thus an open connected set cannot be separated into two non-empty disjoint open sets. We then impose:

\begin{assumption}\label{A:con1} $\mathcal{A}(z)$  is connected for all $z\in \mathcal{Z}_0$.
\end{assumption}

Assumption \ref{A:con1} allows us to go from local identification at a given point $x \in  \mathcal{X}(z)$ to relative identification on all of $\mathcal{X}(z)$, $z\in \mathcal{Z}_0$ via the image of  $a\left(x,w\right)$. The assumption imposes restrictions on the support of the  random variable $a\left(X,W\right)$ instead of $\left(X,W\right)$ themselves. This is done in order to impose minimal restrictions on the distribution of $X$ and the smoothness of $h$. Recall that $W$ is assumed to contain a continuous component. Thus, Assumption \ref{A:con1}  includes, for example, the case of $X$  being unbounded and discrete, or $X$ to be continuous while $h\left(X\right)$ is discontinuous everywhere.
Assumption \ref{A:con1} is not verifiable from data but the same holds for smoothness conditions that are regularly imposed in existing identification results. If we are willing to entertain certain smoothness conditions, such as the inverse of $\Lambda(a,z)$  being continuous with respect to $a$, then the assumption is implied by connectedness of $\Pi(\mathcal{M}(z)|z)= \Lambda(\mathcal{A}(z),z)$, this latter property being verifiable. Similarly, if we restrict $X$ and $h$ to both be continuous, it will be implied by connectedness of $\mathcal{M}\left( z\right) $.

Once we have achieved relative identification on each $\mathcal{X}(z)$, $z\in \mathcal{Z}_0$, global identification is then reached through the following assumption: 

\begin{assumption}\label{A:con2} If $\mathcal{Z}_1 \cup \mathcal{Z}_2 =\mathcal{Z}_0, \mathcal{Z}_1 , \mathcal{Z}_2 \neq \emptyset$, then $(\cup_{z\in\mathcal{Z}_1}\mathcal{M}(z) \cap (\cup_{z\in\mathcal{Z}_2}\mathcal{M}(z) \neq \emptyset$.
\end{assumption} 

This is used to paste together the relatively identified sets  $\mathcal{X}(z)$ across $z$. Again, this assumption does not require $X$ and/or $h$ to be continuous, only that the sets $\mathrm{supp}\left( W,X|Z=z\right) $, $z \in \mathcal{Z}_0$ overlap. Finally, the following normalization on the function $h$ gives us identification on $\mathcal{X}(z_0)$:

\begin{assumption} \label{A:normalisation}There exists known $z_{0}\in\mathcal{Z}_{0}$ and $%
\left(w_{0},x_{0}\right)\in\mathcal{M}\left(z_{0}\right)$ so that $%
h\left(x_{0}\right)=0$.
\end{assumption}

Such a normalization is needed to identify the level of $h$ since, for any given pair of $%
\left( \Lambda ,h\right) $, we have $\Lambda \left( g\left( w\right)
+h\left( x\right) ,z\right) =\tilde{\Lambda}\left( g\left( w\right) +\tilde{h%
}\left( x\right) ,z\right) $ where $\tilde{\Lambda}\left( a,z\right)
=\Lambda \left( a+c,z\right) $ and $\tilde{h}\left( x\right) =h\left(
x\right) -c$ for some given value of $c\in \mathbb{R}^{J}$.

\section{Main result}

\label{sec:results}

As explained earlier, we shall make use of the  notion of relative identification in our proof of identification. As a first step, we show relative identification on any two overlapping images of $a$; this is achieved through injectivity of $\Lambda\left(a,z\right)$ which allows us to map the overlapping images into overlapping images of $\Pi$.

\begin{lemma}
\label{Lem: A overlap}Suppose that Assumption \ref{A:injective} holds, and that $h\left( x^{\ast}\right) $ is identified at $x^{\ast}\in \mathcal{X}\left( z\right) $ for some $z\in \mathcal{Z}_{0}$. Then the set $\mathcal{X}^{\ast}\left(z\right) := \left\{x \in \mathcal{X}(z)|\mathcal{A}\left( x^{\ast},z\right) \cap \mathcal{A}\left( x,z\right) \neq \emptyset \right\} $ is identified and $h\left( x\right) $ is identified on $\mathcal{X}^{\ast}\left(z\right)$.
\end{lemma}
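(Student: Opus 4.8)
The plan is to exploit injectivity of $\Lambda(\cdot,z)$ to transfer identification from the observed function $\Pi$ to the index $a$, and then to the unknown function $h$ via the additive structure $a=g+h$. Fix $z\in\mathcal{Z}_0$ and the reference point $x^{\ast}\in\mathcal{X}(z)$ at which $h(x^{\ast})$ is identified.

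\emph{Step 1: Identify the relative variation of $\Lambda$ on $\mathcal{A}(z)$.} For any $x\in\mathcal{X}(z)$ and any $w$ with $g(w)\in\mathcal{G}(x,z)$, we observe $\Pi(w,x,z)=\Lambda(g(w)+h(x),z)$. Since $g$ is known and $g(w)$ ranges over the open set $\mathcal{G}(x,z)$, we know the value of $\Lambda(a,z)$ for every $a\in\mathcal{A}(x,z)=\mathcal{G}(x,z)+\{h(x)\}$ \emph{up to the unknown shift} $h(x)$; that is, we know the function $b\mapsto\Lambda(b+h(x),z)$ on $\mathcal{G}(x,z)$. In particular, at $x=x^{\ast}$, because $h(x^{\ast})$ is known, we actually know $\Lambda(a,z)$ itself for all $a\in\mathcal{A}(x^{\ast},z)$.

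\emph{Step 2: Pin down $h(x)$ on the overlap set via injectivity.} Take any $x\in\mathcal{X}^{\ast}(z)$, so that there exists $a^{0}\in\mathcal{A}(x^{\ast},z)\cap\mathcal{A}(x,z)$. Write $a^{0}=g(w^{\ast})+h(x^{\ast})=g(w)+h(x)$ for suitable $w^{\ast},w$. From Step 1 we know the value $v:=\Lambda(a^{0},z)$ (via the $x^{\ast}$-data, since $a^{0}\in\mathcal{A}(x^{\ast},z)$). From the $x$-data we know the function $b\mapsto\Lambda(b+h(x),z)$ on $\mathcal{G}(x,z)$ and we observe that it takes the value $v$ at $b=g(w)$. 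By Assumption \ref{A:injective}, $\Lambda(\cdot,z)$ is injective on $\mathcal{A}(z)\supseteq\mathcal{A}(x,z)\cup\mathcal{A}(x^{\ast},z)$, so $a^{0}$ is the \emph{unique} point of $\mathcal{A}(z)$ mapping to $v$; hence $g(w)+h(x)=a^{0}$ is identified, and since $g(w)$ is known, $h(x)=a^{0}-g(w)$ is identified. More carefully, to see that the set $\mathcal{X}^{\ast}(z)$ itself is identified (not just $h$ on it): knowing $\Lambda(\cdot,z)$ on $\mathcal{A}(x^{\ast},z)$ and knowing the function $b\mapsto\Lambda(b+h(x),z)$ on $\mathcal{G}(x,z)$ for each $x$, we can detect whether their ranges overlap, i.e.\ whether there is a $b$ with $\Lambda(b+h(x),z)\in\Lambda(\mathcal{A}(x^{\ast},z),z)$; by injectivity this happens precisely when $\mathcal{A}(x,z)\cap\mathcal{A}(x^{\ast},z)\neq\emptyset$, which is the defining condition of $\mathcal{X}^{\ast}(z)$. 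Thus membership in $\mathcal{X}^{\ast}(z)$ is verifiable from the known objects, so $\mathcal{X}^{\ast}(z)$ is identified, and $h(x)$ is identified for each $x\in\mathcal{X}^{\ast}(z)$ by the argument above.

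\emph{Main obstacle.} The delicate point is the bookkeeping in Step 2: one must be careful that the overlap of the \emph{images} $\mathcal{A}(x,z)$ and $\mathcal{A}(x^{\ast},z)$ (a statement about the unknown shifts $h(x),h(x^{\ast})$) can be reformulated as an overlap of observable quantities, and for this injectivity of $\Lambda(\cdot,z)$ on the \emph{union} $\mathcal{A}(z)$ — rather than merely on each $\mathcal{A}(x,z)$ separately — is exactly what is needed, since it guarantees that a common value $v$ of $\Lambda$ can only arise from a common point $a^{0}$. Once this is set up correctly, the conclusion that $h$ is then identified by subtracting the known $g(w)$ is immediate from the additive form \eqref{eq: a(w,x) def}.
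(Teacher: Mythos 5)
Your proposal is correct and follows essentially the same route as the paper's own proof: injectivity of $\Lambda(\cdot,z)$ on $\mathcal{A}(z)$ turns equality of observed values of $\Pi$ (equivalently, overlap of the observable ranges you describe) into equality of index values, which both identifies the overlap set $\mathcal{X}^{\ast}(z)$ and, since $a^{0}=g(w^{\ast})+h(x^{\ast})$ is known, yields $h(x)=a^{0}-g(w)$. Your Step 1/Step 2 bookkeeping is just a slightly more explicit rephrasing of the paper's argument via the equation $\Pi(w^{\ast},x^{\ast},z)=\Pi(w,x,z)$.
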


\begin{proof} By definition, $\mathcal{A}\left( x^{\ast},z\right) \cap \mathcal{A}\left( x,z\right) \neq \emptyset$ if and only if there exists $w^{\ast}$ and $w$ so that $g\left(w^{\ast}\right) \in \mathcal{G}\left( x^{\ast},z\right) $, $g\left(w\right) \in  \mathcal{G}\left( x,z\right)$ and $a\left(  w^{\ast} ,x^{\ast}\right) = a\left(  w ,x\right)$. Using that $\Lambda \left( a|z\right) $ is injective by Assumption \ref{A:injective}, the last equality is equivalent to $\Lambda \left( a\left(  w^{\ast} ,x^{\ast}\right),z\right) =\Lambda \left( a\left(  w ,x\right),z\right)$, which we recognize as
\begin{equation}
\Pi \left( w^{\ast},x^{\ast},z\right) =\Pi \left( w,x,z\right) ,  \label{eq: Pi solution}
\end{equation}
where $\Pi$ is known to us. Thus, $\mathcal{X}^{\ast}\left(x, z\right)$ is identified as the set of solutions $x$  to (\ref{eq: Pi solution}) as we vary $(w^{\ast},w)$. Next, for any given $x \in \mathcal{X}^{\ast}\left(x, z\right)$, let $w^{\ast}$ and $w$ be the corresponding values for which (\ref{eq: Pi solution})  holds. Since these are known, the value $
a\left( w^{\ast},x^{\ast}\right) =g\left( w^{\ast}\right) +h\left( x^{\ast}\right) $ is also known to us. This in turn implies that $h\left( x\right) =a\left( w,x\right) -g\left( w\right)
=a\left( w^{\ast},x^{\ast}\right) -g\left( w\right) $ is identified.
\end{proof}

We then use this lemma in conjunction with the connectedness of $\mathcal{A}(z)$ to show relative identification on each of the sets $\mathcal{X}\left( z\right) $:

\begin{lemma}
\label{Lem: X(z) ID}Suppose that Assumptions \ref{A:injective}-\ref{A:con1} hold. Then, for all $z\in \mathcal{Z}_{0}$, $h\left( x\right) $ is relatively identified on $\mathcal{X}\left( z\right) $ as defined in eq. (\ref{eq: G(x,z) def}).
\end{lemma}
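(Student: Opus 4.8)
The plan is to convert Assumption \ref{A:con1} into an open-cover argument on the image $\mathcal{A}(z)$, using Lemma \ref{Lem: A overlap} as the local step. Fix $z\in\mathcal{Z}_{0}$ and suppose $h\left(x^{\ast}\right)$ is identified at some $x^{\ast}\in\mathcal{X}\left(z\right)$; we must show that then $h\left(x\right)$ is identified at every $x\in\mathcal{X}\left(z\right)$. (Here $\mathcal{X}(z)$ is read, as throughout, as the set of $x\in\mathrm{supp}(X|Z=z)$ for which $\mathcal{G}(x,z)\neq\emptyset$, so that $\mathcal{A}(x,z)=\mathcal{G}(x,z)+\{h(x)\}$ is a non-empty open set.) Let $\mathcal{I}=\left\{x\in\mathcal{X}\left(z\right):h(x)\text{ is identified}\right\}$; by assumption $x^{\ast}\in\mathcal{I}$, and the goal is to show $\mathcal{I}=\mathcal{X}\left(z\right)$.

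First I would split $\mathcal{A}(z)$ according to which $x$-fibres have been identified. Set
\[
\mathcal{O}_{1}=\bigcup_{x\in\mathcal{I}}\mathcal{A}\left(x,z\right),\qquad
\mathcal{O}_{2}=\bigcup_{x\in\mathcal{X}\left(z\right)\setminus\mathcal{I}}\mathcal{A}\left(x,z\right).
\]
Both are open, being unions of the open sets $\mathcal{A}(x,z)$, and by the definition of $\mathcal{A}(z)$ in (\ref{eq: A(z) def}) we have $\mathcal{O}_{1}\cup\mathcal{O}_{2}=\mathcal{A}\left(z\right)$. Moreover $\mathcal{O}_{1}\supseteq\mathcal{A}\left(x^{\ast},z\right)\neq\emptyset$. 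Suppose, for contradiction, that $\mathcal{X}\left(z\right)\setminus\mathcal{I}\neq\emptyset$; then, since each $\mathcal{A}(x,z)$ is non-empty, also $\mathcal{O}_{2}\neq\emptyset$. Assumption \ref{A:con1}, i.e. connectedness of $\mathcal{A}(z)$, then forces $\mathcal{O}_{1}\cap\mathcal{O}_{2}\neq\emptyset$.

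Pick a point in $\mathcal{O}_{1}\cap\mathcal{O}_{2}$: it lies in $\mathcal{A}(\tilde{x},z)\cap\mathcal{A}(x',z)$ for some $\tilde{x}\in\mathcal{I}$ and some $x'\in\mathcal{X}\left(z\right)\setminus\mathcal{I}$. Since $\tilde{x}\in\mathcal{X}(z)$ and $h(\tilde{x})$ is identified, Lemma \ref{Lem: A overlap} applies at $\tilde{x}$ and shows that $h$ is identified on $\left\{x\in\mathcal{X}\left(z\right):\mathcal{A}(\tilde{x},z)\cap\mathcal{A}(x,z)\neq\emptyset\right\}$, a set that contains $x'$. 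Hence $x'\in\mathcal{I}$, contradicting $x'\notin\mathcal{I}$. Therefore $\mathcal{X}\left(z\right)\setminus\mathcal{I}=\emptyset$, i.e. $\mathcal{I}=\mathcal{X}\left(z\right)$. As $x^{\ast}$ was an arbitrary point of $\mathcal{X}(z)$ this establishes relative identification on $\mathcal{X}\left(z\right)$, and as $z\in\mathcal{Z}_{0}$ was arbitrary the conclusion holds for all such $z$.

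The step I expect to carry the weight is the translation of the dichotomy ``identified vs.\ not-yet-identified'' in $x$-space into a separation of $\mathcal{A}(z)$ into two \emph{open} sets whose non-emptiness mirrors that of the corresponding $x$-subsets; this is exactly what the construction $\mathcal{A}(x,z)=\mathcal{G}(x,z)+\{h(x)\}$ with $\mathcal{G}(x,z)$ open and non-empty delivers, and it is why Assumption \ref{A:con1} is imposed on the image $\mathcal{A}(z)$ rather than on $\mathrm{supp}(W,X|Z=z)$ and why no continuity of $h$ nor topological structure on $\mathcal{X}(z)$ is required. Once that is in place, the argument is a direct application of the definition of connectedness together with Lemma \ref{Lem: A overlap}.
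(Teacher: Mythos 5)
Your proof is correct and follows essentially the same route as the paper's: partition $\mathcal{X}(z)$ into identified and unidentified points, take the corresponding unions of the open fibres $\mathcal{A}(x,z)$, invoke connectedness of $\mathcal{A}(z)$ to force an overlap, and apply Lemma \ref{Lem: A overlap} to obtain a contradiction. Your explicit remark that $\mathcal{X}(z)$ is restricted to those $x$ with $\mathcal{G}(x,z)\neq\emptyset$ is a slight clarification of what the paper leaves implicit, but the argument is the same.
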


\begin{proof}
Let  $x^{\ast }\in \mathcal{X}\left( z\right) $ be given and suppose we know the value of $h\left( x^{\ast }\right) $. Let  $\mathcal{X}^{\ast }\left( z\right) \subseteq \mathcal{X}\left( z\right)$ be the set on which $h\left( x\right) $ is identified and let $\mathcal{A}^{\ast }\left( z\right) =\cup _{x\in \mathcal{X}^{\ast }\left( z\right) } \mathcal{A}\left( x,z\right) $ be the corresponding values of $a\left( w,x\right) $. By assumption $x^{\ast }\in \mathcal{X}^{\ast }\left( z\right) $ and so the identified set is non-empty. This in turn implies that $\mathcal{A}^{\ast }\left( z\right) $ is non-empty and open. Now, seeking a contradiction, suppose that $\mathcal{X}^{\ast \ast }\left( z\right) :=\mathcal{X}\left( z\right) \backslash \mathcal{X}^{\ast }\left( z\right) \neq \varnothing $. Then define $\mathcal{A}^{\ast \ast }\left( z\right) =\cup _{x\in \mathcal{X}^{\ast \ast }\left( z\right) }\mathcal{A}\left(x,z\right) $ which is also open and non-empty. Since $\mathcal{A}^{\ast}\left( z\right) \cup \mathcal{A}^{\ast \ast }\left( z\right) =\mathcal{A}\left( z\right) $, which is connected according to Assumption \ref{A:con1}, there must exist $x\in \mathcal{X}^{\ast }\left( z\right) $ and $x^{\prime }\in \mathcal{X}^{\ast \ast }\left(z\right) $ so that $\mathcal{A}\left( x,z\right) \cap \mathcal{A}\left(x^{\prime },z\right) \neq \varnothing $. Lemma \ref{Lem: A overlap} then implies that $x^{\prime}$ and $h\left( x^{\prime }\right) $ is also identified which is a contradiction.
\end{proof}

Finally, the "connectedness" of $\cup_{z\in\mathcal{Z}_0}\mathcal{M}(z)$ as stated in Assumption \ref{A:con2}  together with the normalization in Assumption \ref{A:normalisation} gives us global identification:

\begin{theorem}
Under Assumptions \ref{A:injective}-\ref{A:normalisation}, $h\left( x\right) $ is identified on $\mathcal{X}_{0}=\cup _{z\in \mathcal{Z}%
_{0}}\mathcal{X}\left( z\right) $.
\end{theorem}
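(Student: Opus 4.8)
The plan is to bootstrap from the single normalization point supplied by Assumption \ref{A:normalisation} to all of $\mathcal{X}_0$ in two stages: first spread identification of $h$ across $\mathcal{X}(z)$ for each fixed $z \in \mathcal{Z}_0$ using Lemma \ref{Lem: X(z) ID}, then paste these slices together across $z$ using the connectedness built into Assumption \ref{A:con2}. The structure mirrors the two-step decomposition described in the Introduction: relative identification within each conditional support, followed by a connectedness argument to glue the slices.

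First I would fix the base slice. By Assumption \ref{A:normalisation} there is a known $z_0 \in \mathcal{Z}_0$ and a point $(w_0,x_0) \in \mathcal{M}(z_0)$ with $h(x_0) = 0$; in particular $h(x_0)$ is identified, and $x_0 \in \mathcal{X}(z_0)$. Lemma \ref{Lem: X(z) ID} then gives that $h$ is relatively identified on $\mathcal{X}(z_0)$, so $h$ is in fact identified on all of $\mathcal{X}(z_0)$.

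Next I would run a connectedness argument at the level of $\mathcal{Z}_0$, exactly paralleling the proof of Lemma \ref{Lem: X(z) ID} but one level up. Let $\mathcal{Z}^{\ast} \subseteq \mathcal{Z}_0$ be the set of $z$ for which $h$ is identified on all of $\mathcal{X}(z)$; by the previous step $z_0 \in \mathcal{Z}^{\ast}$, so it is non-empty. Suppose, seeking a contradiction, that $\mathcal{Z}^{\ast\ast} := \mathcal{Z}_0 \setminus \mathcal{Z}^{\ast}$ is also non-empty. Set $\mathcal{Z}_1 = \mathcal{Z}^{\ast}$ and $\mathcal{Z}_2 = \mathcal{Z}^{\ast\ast}$; since these partition $\mathcal{Z}_0$ and are both non-empty, Assumption \ref{A:con2} yields a point in $\bigl(\cup_{z\in\mathcal{Z}^{\ast}}\mathcal{M}(z)\bigr) \cap \bigl(\cup_{z\in\mathcal{Z}^{\ast\ast}}\mathcal{M}(z)\bigr)$, i.e.\ there are $z' \in \mathcal{Z}^{\ast}$, $z'' \in \mathcal{Z}^{\ast\ast}$ and a common point $(w,x) \in \mathcal{M}(z') \cap \mathcal{M}(z'')$. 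Then $x \in \mathcal{X}(z') \cap \mathcal{X}(z'')$ and, since $z' \in \mathcal{Z}^{\ast}$, the value $h(x)$ is identified. But $h(x)$ does not depend on $z$, so $h(x)$ is identified as a point of $\mathcal{X}(z'')$; applying Lemma \ref{Lem: X(z) ID} to $z''$ then shows $h$ is relatively identified — hence identified — on all of $\mathcal{X}(z'')$, contradicting $z'' \in \mathcal{Z}^{\ast\ast}$. Therefore $\mathcal{Z}^{\ast\ast} = \varnothing$, so $\mathcal{Z}^{\ast} = \mathcal{Z}_0$ and $h$ is identified on $\cup_{z \in \mathcal{Z}_0} \mathcal{X}(z) = \mathcal{X}_0$.

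The step I expect to require the most care is the pasting step, specifically making precise that the overlap supplied by Assumption \ref{A:con2} lives in $\mathcal{M}$-space (pairs $(w,x)$) and must be pushed down to a shared $x$-coordinate that lies in both $\mathcal{X}(z')$ and $\mathcal{X}(z'')$; the projection $\mathcal{M}(z) \supseteq \mathcal{G}(x,z)\times\{x\} \mapsto x$ does this, but one should check that a common $(w,x)$ indeed certifies $x \in \mathcal{X}(z'') = \mathrm{supp}(X|Z=z'')$ so that Lemma \ref{Lem: X(z) ID} is applicable there. Everything else is a routine transcription of the connectedness-contradiction template already used for Lemma \ref{Lem: X(z) ID}, now with $\mathcal{Z}_0$ in place of $\mathcal{X}(z)$ and Assumption \ref{A:con2} playing the role that Assumption \ref{A:con1} played there.
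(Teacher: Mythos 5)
Your proposal is correct and follows essentially the same route as the paper's own proof: both bootstrap from the normalization point via Lemma \ref{Lem: X(z) ID} on each slice $\mathcal{X}(z)$, and then run the same partition-and-contradiction argument on $\mathcal{Z}_{0}$ using Assumption \ref{A:con2} to find an overlap in $\mathcal{M}$-space that transfers identification to a slice $\mathcal{X}(z'')$ outside the identified set. Your explicit remark about projecting the common point $(w,x)$ down to a shared $x \in \mathcal{X}(z')\cap\mathcal{X}(z'')$ is exactly the step the paper handles implicitly, so there is nothing to add.
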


\begin{proof}
Let $\mathcal{X}^{\ast }$ be the identified set. By Lemma \ref{Lem: X(z) ID}, $\mathcal{X}^{\ast }=\cup _{z\in \mathcal{Z}^{\ast }}\mathcal{X}\left(z\right) $ for some $\mathcal{Z}^{\ast }\subseteq \mathcal{Z}_{0}$. By Assumption 4, $z_{0}\in \mathcal{Z}^{\ast }$ and so the set is non-empty. Seeking a contradiction, suppose that $\mathcal{Z}^{\ast \ast }:=\mathcal{Z}%
_{0}\backslash \mathcal{Z}^{\ast }\neq \varnothing $. By definition $%
\mathcal{Z}^{\ast }\cup \mathcal{Z}^{\ast \ast }=\mathcal{Z}_{0}$ and so $%
\left\{ \cup _{z\in \mathcal{Z}^{\ast }}\mathcal{M}(z)\right\} \cap \left\{
\cup _{z\in \mathcal{Z}^{\ast \ast }}\mathcal{M}(z)\right\} \neq \emptyset $
by Assumption \ref{A:con2}. This implies that there exists $z^{\ast }\in \mathcal{Z}%
^{\ast }$ and $z^{\ast \ast }\in \mathcal{Z}^{\ast \ast }$ so that $\mathcal{%
M}(z^{\ast })\cap \mathcal{M}(z^{\ast \ast })\neq \emptyset $ which in turn implies that there exists $x^{\ast }\in \mathcal{X}\left( z^{\ast }\right)
\cap \mathcal{X}\left( z^{\ast \ast }\right) $ for which $h\left( x^{\ast
}\right) $ is identified. But then Lemma \ref{Lem: X(z) ID} implies that $%
h\left( x\right) $ is identified on all of $\mathcal{X}\left( z^{\ast \ast
}\right) $ which is a contradiction.
\end{proof}
 
Once we have identified $h$ we can also identify $\Lambda$:

\begin{theorem}
\label{Th: Lamda id}Under Assumptions \ref{A:injective}-\ref{A:normalisation}, $\Lambda \left(a,z \right)$ is identified on  $\left\{\left(a,z \right)| a \in\mathcal{A}(z), z \in\mathcal{Z}_0 \right\}$.
\end{theorem}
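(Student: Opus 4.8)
The plan is to bootstrap off the identification of $h$ established in the previous theorem. Since $g$ is known and $h$ has already been pinned down on $\mathcal{X}_{0}$, every index value $a$ in the target region can be written explicitly as $g(w)+h(x)$ for an \emph{observable} triple $(w,x,z)$, and at such a triple the model equation forces $\Lambda(a,z)$ to coincide with the observed quantity $\Pi(w,x,z)$; identification of $\Lambda$ then follows by direct substitution. Note that no genuinely new structural assumption is invoked: Assumptions \ref{A:injective}--\ref{A:normalisation} enter only through the previous theorem, and the region $\{(a,z):a\in\mathcal{A}(z),\,z\in\mathcal{Z}_{0}\}$ is precisely the set of index values that are actually attained under $Z\in\mathcal{Z}_{0}$, which is exactly where an observable equation is available.

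Concretely, I would proceed as follows. First, by the preceding theorem, $h(x)$ is identified for every $x\in\mathcal{X}_{0}=\cup_{z\in\mathcal{Z}_{0}}\mathcal{X}(z)$. Next, fix an arbitrary pair $(a,z)$ with $z\in\mathcal{Z}_{0}$ and $a\in\mathcal{A}(z)$. By the definition of $\mathcal{A}(z)$ in (\ref{eq: A(z) def}), there exists $x\in\mathcal{X}(z)$ with $a\in\mathcal{A}(x,z)=\mathcal{G}(x,z)+\{h(x)\}$; equivalently, there is a $w$ with $g(w)=a-h(x)\in\mathcal{G}(x,z)$, and the corresponding $(w,x,z)$ lies in $\mathrm{supp}(W,X,Z)$. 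Since $g$ is known and $h(x)$ is identified (because $x\in\mathcal{X}(z)\subseteq\mathcal{X}_{0}$), the index $a=g(w)+h(x)$ is known to us. Finally, the model equations (\ref{eq:model})--(\ref{eq: a(w,x) def}) give $\Pi(w,x,z)=\Lambda\bigl(g(w)+h(x),z\bigr)=\Lambda(a,z)$ with the left-hand side observed, so $\Lambda(a,z)$ is identified; as $(a,z)$ was arbitrary in $\{(a,z):a\in\mathcal{A}(z),\,z\in\mathcal{Z}_{0}\}$, this proves the claim.

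The only step requiring care is the passage from $a\in\mathcal{A}(x,z)$ to an actual observable triple, i.e.\ the assertion that a point of $\mathcal{G}(x,z)$ --- the interior of the support of $g(W)\mid X=x,Z=z$ --- is realised as $g(w)$ for some $w$ in the conditional support, so that $(w,x,z)\in\mathrm{supp}(W,X,Z)$ and $\Pi(w,x,z)$ is indeed observed. This is precisely the reading of $\mathcal{G}$ already used in the proof of Lemma \ref{Lem: A overlap}, so no new argument is needed here. I expect this to be the main (and essentially only) subtlety; everything else is a one-line substitution into the model equation, which is why the statement follows almost immediately once $h$ is in hand.
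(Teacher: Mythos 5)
Your proposal is correct and follows essentially the same route as the paper's own proof: use the definition of $\mathcal{A}(z)$ to produce a point $(w,x)\in\mathcal{M}(z)$ with $a=g(w)+h(x)$, invoke the identification of $h$ from the preceding theorem, and read off $\Lambda(a,z)=\Pi(w,x,z)$ from the observed $\Pi$. The subtlety you flag about $\mathcal{G}(x,z)$ being realised within the conditional support is handled the same way in the paper, so nothing further is needed.
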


\begin{proof}
Let $z\in\mathcal{Z}_{0}$ and $a\in\mathcal{A}\left(z\right)$ be given. By definition of $\mathcal{A}\left(z\right)$, there exists some pair $%
\left(w,x\right)\in\mathcal{M}\left(z\right)$ such that $a=a\left(w,x\right)$. Since $h(\cdot)$ and thereby also $a\left(\cdot,\cdot\right)$ is identified, the pair $%
\left(w,x\right) $ is known. But then we also know $%
\Pi\left(w,x,z\right)$ and so $\Lambda\left(a,z\right)=\Pi\left(w,x,z\right)$ is uniquely identified.
\end{proof}

\section{Applications}\label{Sec:Application}

This section applies the general result to the two main examples of Section \ref{Sec: Examples}, the ARUM and the competing risk model, and compare our identification results for these two models with existing ones found in the literature.  In both examples, we impose the following conditional independence restriction on the error term:

\begin{assumption} \label{A:cond independence}(i) $\varepsilon$ is conditionally independent of $\left(X,W\right) $, $F_{\varepsilon |\left( W,X,Z\right)
}\left(\cdot|\cdot,\cdot,z\right)=F_{\varepsilon |Z}\left(\cdot|z\right)$ for all $z \in \mathcal{Z}_0$ for some $\mathcal{Z}_0 \subseteq \mathrm{supp}\left(Z\right) $; (ii) $F_{\varepsilon |Z}\left(\cdot|z\right)$  has a conditional density with full support for all $z \in \mathcal{Z}_0$.
\end{assumption}
We demonstrate in the next two subsections that part (i) implies $\Pi$, as defined in eq. (\ref{eq: Pi ARUM}) and \ref{eq: Pi competing risk}, respectively, can be written on the form (\ref{eq:model})-(\ref{eq: a(w,x) def}) for all $z \in \mathcal{Z}_0$, while part (ii) ensures that the model specific $\Lambda(a,z)$ is injective w.r.t $a$ for all $z \in \mathcal{Z}_0$.

\subsection{ARUM}
Define the surplus function
\begin{equation*}
G\left( a_{0},...a_{J},z\right) :=E\left[ \max_{j=0,...,J}U_j|a\left( W,X\right)  =a,Z=z%
\right] =E\left[ \max_{j=0,...,J}\left\{ \varepsilon
_{j}+a_{j}\right\} |Z=z\right] ,
\end{equation*}for any given $\left( a_{0},a_{1},...,a_{J}\right) \in \mathbb{R}^{J+1}$, where the second equality uses eq. (\ref{ARUM model}) and Assumption \ref{A:cond independence}(i). The Williams-Daly-Zacchary Theorem \citep[][]{McFadden1981} then implies that the CCP's, as defined in (\ref{eq: Pi ARUM}), can be written on the form (\ref{eq:model})-(\ref{eq: a(w,x) def}) with $\Lambda$  defined as the gradient of the surplus function, 
\begin{equation*}
\Lambda \left( a,z\right): =\left. \frac{%
\partial G\left( a,z\right) }{\partial a}\right\vert
_{a_{0}=0}.
\end{equation*}We conclude:
\begin{corollary}
Any ARUM on the form (\ref{ARUM model}) that satisfies Assumptions \ref{A:injective}-\ref{A:normalisation} and \ref{A:cond independence}(i) is identified. 
\end{corollary}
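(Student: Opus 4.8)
The plan is to verify that any ARUM of the form (\ref{ARUM model}) satisfying Assumption~\ref{A:cond independence}(i) is a special case of the general model (\ref{eq:model})--(\ref{eq: a(w,x) def}), and then simply to quote the two identification theorems of Section~\ref{sec:results}; the substance is the functional-form verification, after which identification is automatic.

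First I would reduce the conditional choice probabilities to a function of the index and the control. By the definition (\ref{eq: Pi ARUM}) and the utility specification (\ref{ARUM model}), $\Pi_j(w,x,z) = P\left(\varepsilon_j + a_j(w,x) \geq \varepsilon_k + a_k(w,x) \text{ for all } k \mid W = w, X = x, Z = z\right)$. Assumption~\ref{A:cond independence}(i) replaces the conditioning event $\{W=w, X=x, Z=z\}$ by $\{Z = z\}$, so that $\Pi_j(w,x,z)$ depends on $(w,x)$ only through the index vector $a(w,x) = g(w) + h(x)$ and on $z$; this is exactly the structure (\ref{eq:model})--(\ref{eq: a(w,x) def}), with $g$ known and with $h$ and the map from $a$ to $\Pi$ unknown, and the maintained outside-option normalisation $a_0(w,x) \equiv 0$ is consistent with it.

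Next I would identify the unknown map explicitly with the gradient of the surplus function $G(a_0,\dots,a_J,z) = E\big[\max_{j}\{\varepsilon_j + a_j\} \mid Z = z\big]$. Being the expectation of a pointwise maximum of affine functions of $(a_0,\dots,a_J)$, $G$ is convex in $a$ and hence differentiable Lebesgue-almost everywhere, and the Williams--Daly--Zacchary theorem \citep{McFadden1981} gives $\partial G / \partial a_j = P(D = j \mid a, z)$ wherever the gradient exists; setting $\Lambda(a,z) := \partial_a G(a,z)|_{a_0 = 0} \in \mathbb{R}^J$ then yields $\Pi(w,x,z) = \Lambda(a(w,x),z)$. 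Since the corollary maintains Assumptions~\ref{A:injective}--\ref{A:normalisation}, the first of the two theorems in Section~\ref{sec:results} then identifies $h$ on $\mathcal{X}_0$ and Theorem~\ref{Th: Lamda id} identifies $\Lambda$ on $\{(a,z) : a \in \mathcal{A}(z),\ z \in \mathcal{Z}_0\}$, which is the assertion.

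The point requiring care is that $\Lambda = \partial_a G|_{a_0 = 0}$ should be well-defined on the interior of the support of $a(W,X)$ rather than only almost everywhere; this is where Assumption~\ref{A:cond independence}(ii) --- a conditional density with full support --- would be invoked, since it rules out ties $\varepsilon_j + a_j = \varepsilon_k + a_k$ on positive-probability sets and makes $a \mapsto P(D = j \mid a, z)$ continuous, so that the WDZ representation holds pointwise. Under only part~(i), as here, one instead takes $\Pi$ as observed and of the stated form --- equivalently, one simply \emph{defines} $\Lambda$ to be the conditional choice-probability map $a \mapsto \big(P(D = j \mid a, z)\big)_{j=1}^{J}$ --- in which case the gradient representation is an additional structural interpretation rather than an input to the identification proof, and the corollary follows from the general result with no further argument.
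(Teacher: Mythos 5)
Your proposal is correct and follows essentially the same route as the paper: verify via Assumption \ref{A:cond independence}(i) that the ARUM's choice probabilities depend on $(w,x)$ only through the index $a(w,x)=g(w)+h(x)$, express the resulting map as the gradient of the surplus function via the Williams--Daly--Zacchary theorem so the model takes the form (\ref{eq:model})--(\ref{eq: a(w,x) def}), and then invoke the two theorems of Section \ref{sec:results}. Your closing observation---that under part (i) alone one may simply take $\Lambda$ to be the CCP map as a function of the index, with the gradient representation serving only as interpretation---is a sensible refinement but does not change the argument.
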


Next, we discuss each of Assumptions \ref{A:injective}-\ref{A:normalisation} in the context of ARUM and how these compare with existing ones found in the literature on identification of ARUM.

First, Assumption \ref{A:injective}, injectivity of $\Lambda \left( \cdot ,z\right) $ for each $z$, is implied by Assumption \ref{A:cond independence}(ii), c.f. \citet[Thm 2.1]{Hofbauer2002}. However, Assumption \ref{A:cond independence}(ii) is not necessary  for injectivity to hold. A simply example is the binomial model, where the probability for alternative 0 is the cumulative distribution of $\varepsilon _{1}$. If the distribution includes point masses, then ties can occur, but this does not destroy injectivity. This is true for any tie-breaking rule. More generally, if the subdifferential of the surplus function is strictly cyclically monotone \citep[][]{Rockafellar1970}, which does not require the existence of a density, then the utility maximizing choice probabilities under any tie-breaking rule are injective \citep[][]{Sorensen2019}.

Assumptions \ref{A:con1}-\ref{A:con2} impose restrictions on the joint variation of $\left(g(W),X\right)$. For Assumption \ref{A:con1} to hold, we need to identify $J$ regressors, $g(W)$, that exhibit enough joint continuous variation so their joint support, conditional on  $(X,Z)$ has non-empty interior on $\mathbb{R}^J$. One instance where this can be achieved is if we have observed alternative specific characteristics. In case of demand modellling, one such choice would be a (transformation) of the (relative) prices of the different alternative while $X$ contains all remaining regressors, possibly including other alternative specific covariates. In this case, to control for potential endogeneity of prices, we could then include cost shifters in $Z$. Prices tend to exhibit continuous variation and Assumptions \ref{A:con1} would be likely to hold. Assumption \ref{A:con2} requires other observed product characteristics and the agent's observed characteristics to exhibit sufficient variation conditional on the controls in $Z$ so that these have overlapping support across different values of $Z$.

As already mentioned in the introduction, there are few fully nonparametric identification results for ARUM. To our knowledge, the only results comparable to ours are found in \cite{Matzkin1993}. Her results also require the presence of alternative specific regressors but impose stronger conditions on these and other covariates. Moreover, her set-up does not include any control variables. On the other hand, she does not necessarily require that $a(W,X)$ is additive, which we assume throughout. Theorem 1 of \cite{Matzkin1993} does allow for dependence between $(W,X)$  and $\varepsilon$ but in this case, she requires the observed component of the utilities to be identical across alternatives and strictly increasing in one of the arguments. In our notation, this requires $a_j(W,X)$, $j=1,...,J$ to all be identical. We do not impose any such constraints. Her Theorem 2 requires full independence between  $(W,X)$  and $\varepsilon$ but, on the other hand, impose fewer restrictions on  $a(W,X)$ compared to us. But in both cases, she identifies $\Lambda$  by letting different components of $W$ diverge to $+\infty$, which is an example of "thin set identification" discussed earlier.

\subsection{Perturbed discrete choice}
We here demonstrate that the CCP's for the pertubed discrete choice model again can be expressed on the form  (\ref{eq:model})-(\ref{eq: a(w,x) def})  with $\Lambda $ defined in (\ref{pertubed utility}) being injective. This is done under the following restrictions: First, in order to rule out zero demands, the norm of the gradient $\nabla _{q}\Omega \left( q|z\right) $ has to approach infinity as $q$ approaches the boundary of the unit simplex. Second, $\Omega \left( q|z\right) $ is differentiable\footnote{Note we do not require a Hessian.}. Third, we normalize the outside option so that $g_{0}\left( w\right) =h_{0}\left( x\right) =0$. Under these three restrictions, for each value of the control $z$, the demand solves the first-order condition for an interior solution,
\begin{equation*}
a+\nabla _{q}\Omega \left( \Lambda \left( a,z\right) |z\right) =\lambda
\iota ,
\end{equation*}where $\lambda $ is a scalar constant and $\iota \in \mathbb{R}^{J}$ is a vector consisting of ones. To show that $\Lambda $ is injective, consider this equation at $a_{1}$ and $a_{2}$ and assume that $\Lambda \left(
a_{1},z\right) =\Lambda \left( a_{2},z\right) $. Define a matrix $M$ such that $Mx=x-x_{0}\iota $ for all $x=\left( x_{0},...,x_{J}\right) \in \mathbb{R}^{J+1}$. Pre-multiply this matrix onto the first-order condition to obtain that
\begin{equation*}
a_{1}+M\nabla _{q}\Omega \left( \Lambda \left( a_{1},z\right) |z\right)
=a_{2}+M\nabla _{q}\Omega \left( \Lambda \left( a_{2},z\right) |z\right) ,
\end{equation*}which implies that $a_{1}=a_{2}$ as required. 

\subsection{Competing Risk}
Define
\begin{equation}
\Lambda \left( a,z\right) :=G\left( a,z\right) \cdot \frac{\partial
G\left( a,z\right) }{\partial a}, \label{eq: Lambda competing risk}
\end{equation}where as before $a=\left( a_{1},...,a_{J}\right) $ while $G\left( a,z\right)
$ is now defined as the expected log failure time,
\begin{equation*}
G\left( a,z\right) :=E\left[ \ln Y|a\left( W,X\right)  =a,Z=z%
\right] =-E\left[ \max_{j=1,...,J}\left\{ -a_{j}+\varepsilon _{j}\right\}
|Z=z\right] ,
\end{equation*}where the second equality uses  eq. (\ref{competing risk model}) and Assumption \ref{A:cond independence}(i). Williams-Daly-Zacchary Theorem \citep[][]{McFadden1981}  then implies that $\Pi$, now defined by (\ref{eq: Pi competing risk}), can be written on the form (\ref{eq:model})-(\ref{eq: a(w,x) def}). Injectivity of $\Lambda \left( a,z\right)  $, as given in eq. (\ref{eq: Lambda competing risk}), is obtained by recycling the arguments of the previous subsection except that no normalization of one of the causes of failure is required since the level $G\left( a,z\right) $ is included.
\begin{corollary}
Any competing risk model on the form (\ref{competing risk model}) that satisfies Assumptions \ref{A:injective}-\ref{A:normalisation} and \ref{A:cond independence}(i) is identified. 
\end{corollary}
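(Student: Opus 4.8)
The plan is to deduce the corollary from the general identification results of Section~\ref{sec:results} once the competing risk model has been placed inside the framework of eqs.~(\ref{eq:model})--(\ref{eq: a(w,x) def}). First I would verify this embedding. Under Assumption~\ref{A:cond independence}(i), conditionally on $Z=z$ the pair $(W,X)$ is independent of $\varepsilon$, so by eq.~(\ref{competing risk model}) the joint law of $(\ln Y,D)$ given $(W,X,Z)=(w,x,z)$ depends on $(w,x)$ only through the index $a(w,x)=g(w)+h(x)$; in particular $E[\ln Y\,|\,W=w,X=x,Z=z]=G(a(w,x),z)$ and $P(D=j\,|\,W=w,X=x,Z=z)=P(D=j\,|\,a(W,X)=a(w,x),Z=z)$, where $D=\arg\max_k\{-a_k+\varepsilon_k\}$. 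The Williams--Daly--Zacchary/envelope identity $\partial G(a,z)/\partial a_j=P(D=j\,|\,a(W,X)=a,Z=z)$ then gives $\Pi_j(w,x,z)=G(a(w,x),z)\cdot\partial_{a_j}G(a(w,x),z)=\Lambda_j(a(w,x),z)$ with $\Lambda$ as in eq.~(\ref{eq: Lambda competing risk}). Thus $\Pi$ has the required form with $g$ known, and, as explained in the text, $\Lambda(\cdot,z)$ is injective by recycling the perturbed-utility argument, so Assumption~\ref{A:injective} holds.

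With Assumptions~\ref{A:injective}--\ref{A:normalisation} in force on the same $\mathcal{Z}_0$ as in Assumption~\ref{A:cond independence}(i), the two theorems of Section~\ref{sec:results} apply directly: the first identifies $h$ on $\mathcal{X}_0$, and Theorem~\ref{Th: Lamda id} identifies $\Lambda(a,z)$ on $\{(a,z):a\in\mathcal{A}(z),z\in\mathcal{Z}_0\}$. Since $g$ is known, identification of $h$ yields identification of $a(w,x)=g(w)+h(x)$ for every $(w,x)$ with $x\in\mathcal{X}_0$ and $g(w)\in\mathcal{G}(x,z)$ for some $z\in\mathcal{Z}_0$. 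To extract the remaining structural content I would use that the cause-specific probabilities sum to one, so $\sum_{j=1}^{J}\Lambda_j(a,z)=G(a,z)\sum_{j=1}^{J}\partial_{a_j}G(a,z)=G(a,z)$; hence $G(\cdot,z)$ is identified on $\mathcal{A}(z)$, and so is $\partial_a G(\cdot,z)=\Lambda(\cdot,z)/G(\cdot,z)$ wherever $G\neq 0$, i.e.\ the vector of cause-specific probabilities. Knowledge of $G(\cdot,z)=-E[\max_k\{-a_k+\varepsilon_k\}\,|\,Z=z]$ on the open set $\mathcal{A}(z)$ then pins down the joint law of the contrasts $(\varepsilon_j-\varepsilon_k)_{k\neq j}$ over the open region $\{(a_j-a_k)_{k\neq j}:a\in\mathcal{A}(z)\}$; in view of the location normalization of Assumption~\ref{A:normalisation} this is as much of the error distribution as can be point-identified absent a large-support condition, consistent with the trade-off highlighted in the Introduction.

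The only genuinely non-mechanical step is the embedding, specifically the envelope identity $\partial_{a_j}G=P(D=j\,|\,\cdot)$, which presupposes differentiability of $a\mapsto G(a,z)$. This convex function is differentiable off a Lebesgue-null set, but under Assumption~\ref{A:cond independence}(i) alone it may fail at isolated points where ties occur with positive probability, so I would either strengthen to the density condition of Assumption~\ref{A:cond independence}(ii) (under which both differentiability and injectivity are immediate) or, following the perturbed-utility subsection, interpret $\partial_a G$ as a selection from the subdifferential and verify that the identity continues to hold for the loss-minimizing choice probabilities under an arbitrary tie-breaking rule. Once the embedding is secured everything else is a direct invocation of Assumptions~\ref{A:injective}--\ref{A:normalisation} and the theorems of Section~\ref{sec:results}, so the corollary follows.
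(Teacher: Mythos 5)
Your proposal is correct and follows essentially the same route as the paper: use Assumption \ref{A:cond independence}(i) plus the Williams--Daly--Zacchary identity to write $\Pi_j(w,x,z)=G(a(w,x),z)\,\partial_{a_j}G(a(w,x),z)=\Lambda_j(a(w,x),z)$, note injectivity by recycling the perturbed-utility argument (and in any case it is assumed via Assumption \ref{A:injective}), and then invoke the two theorems of Section \ref{sec:results} to identify $h$ and $\Lambda$. Your additional remarks (recovering $G$ from $\sum_j\Lambda_j$, partial identification of the error contrasts, and the differentiability caveat for the envelope step) go beyond what the paper's corollary claims but do not alter the argument.
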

Given that the competing risk model and the ARUM share a similar structure, the discussion of the remaining assumptions carry over to the current setting with obvious modifications.

Compared to existing results \citep[][]{Heckman1989,Lee2013}  we impose stronger conditions on the index $a(W,X)$ since we require it to be additive and with $g(W)$ known. On the other hand, \cite{Heckman1989} require $a(W,X)$  to go to zero as $W$ diverges, and so relies on a "thin set identification" argument, while \cite{Lee2013}  rely on a high-level functional rank-condition. It is unclear which primitive conditions suffice for this rank condition to hold. Finally, \citet{Honore2006}  restrict themselves to the case of purely discrete regressors and are only able to derive bounds for objects of interest. We achieve point identification as long as there is some continuous variation in $W$ while $X$ can be completely discrete

\section{Conclusion}\label{Sec:Conclusion}

We have established an identification result for a wide class of index models based on general topological arguments. Three key features of our
argument is that smoothness of the model is not required; no large support condition is imposed on the regressors; and control variables may contribute to achieving identification. We leave the development of nonparametric estimators of the identified components for future research.

\bibliographystyle{chicago}
\bibliography{references}

\end{document}